	\newtheorem{lemma}{\textbf{Lemma}}
	\newtheorem{definition}{\textbf{Definition}}
	\newtheorem{theorem}{\textbf{Theorem}}
	\newtheorem{remark}{\textbf{Remark}}
	\newtheorem{problem}{\textbf{Problem}}
\newcommand{\T}{^{\mbox{\tiny T}}}
\newcommand{\R}{\mathbb{R}}
\newcommand{\eps}{\varepsilon}
\let\leq\leqslant
\let\geq\geqslant
\newenvironment{proof}[1][Proof]%
{\par\noindent\textit{#1:\ }}%
{\hspace*{\fill} \rule{6pt}{6pt}}
\newenvironment{proof*}[1][Proof]%
{\par\noindent\textit{#1:\ }}{}
\DeclareMathOperator{\diag}{diag}
\DeclareMathOperator{\sat}{sat}
\DeclareMathOperator{\sgn}{sgn}
\newenvironment{system}[1]%
{\setlength{\arraycolsep}{0.5mm}\left\{ \; \begin{array}{#1}}%
	{\end{array} \right.}
\newenvironment{system*}[1]%
{\setlength{\arraycolsep}{0.5mm} \begin{array}{#1}}%
	{\end{array}}
\begin{document}

\title{\LARGE \textbf{Scalable global state synchronization of discrete-time double integrator multi-agent systems with input saturation via linear protocol}}
	\author{ Zhenwei Liu\aref{neu}, Ali Saberi\aref{wsu},
	Anton A. Stoorvogel\aref{ut}}

\affiliation[neu]{College of Information Science and
	Engineering, Northeastern University, Shenyang 110819, China
		\email{liuzhenwei@ise.neu.edu.cn}}
\affiliation[wsu]{School of Electrical Engineering and Computer
	Science, Washington State University, Pullman, WA 99164, USA
	\email{saberi@wsu.edu}}
\affiliation[ut]{Department of Electrical Engineering,
	Mathematics and Computer Science, University of Twente, Enschede, The Netherlands
	\email{A.A.Stoorvogel@utwente.nl}}

\maketitle

\begin{abstract}
   This paper studies scalable global state
   synchronization of discrete-time double integrator multi-agent systems in presence of input saturation based on
   localized information exchange. A \textit{scale-free} collaborative linear
   dynamic protocols design methodology is developed for discrete-time
   multi-agent systems with both full and partial-state couplings.
   And the protocol design methodology does not
   need any knowledge of the directed network topology and the spectrum
   of the associated Laplacian matrix. Meanwhile, the protocols are parametric based on a parameter set in which the designed protocols can guarantee the global synchronization result.
   Furthermore, the proposed protocol
   is scalable and achieves synchronization for any arbitrary number of
   agents. 
  \end{abstract}

\keywords{Discrete-time double integrator multi-agent systems, Global state synchronization, Scale-free linear protocol}


\section{Introduction}

In recent years, the synchronization or consensus problem of multi-agent system (MAS) has
attracted much more attention, due to its wide
potential for applications in several areas such as automotive vehicle
control, satellites/robots formation, sensor networks, and so on. See
for instance the books
\cite{bai-arcak-wen,bullobook,kocarev-book, mesbahi-egerstedt,ren-book,
	saberi-stoorvogel-zhang-sannuti,wu-book} and references therein.

At present, most work in synchronization for MAS
focused on state synchronization of continuous-time and discrete-time
homogeneous networks. State synchronization based on diffusive
\emph{full-state coupling} (it means that all states are communicated over the network) has been studied where the
agent dynamics progress from single- and double-integrator (e.g.
\cite{eichler-werner,hadjicostis-charalambous,li-zhang,saber-murray2,ren,ren-beard,tuna2}) to more general
dynamics (e.g.  \cite{tuna1,wieland-kim-allgower,you-xie}). State
synchronization based on diffusive \emph{partial-state coupling} (i.e., only part of
the states are communicated over the network) has
also been considered, including static design (\cite{chopra-tac,liu-zhang-saberi-stoorvogel-auto,liu-zhang-saberi-stoorvogel-ejc}), dynamic design
(\cite{kim-shim-back-seo,liu2018regulated,seo-shim-back,su-huang-tac,tuna3,wang-saberi-stoorvogel-grip-yang}), and the design with additional communication
(\cite{chowdhury-khalil, li-duan-chen, scardovi-sepulchre}). 


On the other hand, it is worth to note that actuator saturation is pretty
common and indeed is ubiquitous in engineering applications. Some
researchers have tried to establish (semi) global state and output
synchronization results for both continuous- and discrete-time MAS in the presence of input
saturation. From the existing literature
for a linear system subject to actuator saturation, we have the following conclusion \cite{saberi-stoorvogel-zhang-sannuti}:
\begin{enumerate}
	\item A linear protocol is used if we consider synchronization in
	the semi-global framework (i.e.\ initial conditions of agents are in
	a priori given compact set).
	\item Synchronization in the global sense (i.e., when initial
	conditions of agents are anywhere) in general requires a nonlinear
	protocol.
	\item Synchronization in the presence of actuator saturation requires
	eigenvalues of agents to be in the closed left half plane for
	continuous-time systems and in the closed unit disc for
	discrete-time systems, that is the agents are at most weakly
	unstable.	
\end{enumerate}
The semi-global synchronization has been studied in \cite{su-chen-lam-lin} via
full-state coupling. For partial state coupling, we have
\cite{su-chen,zhang-chen-su}
which are based on the extra communication. Meanwhile, the result without the extra communication is developed in \cite{zhang-saberi-stoorvogel-continues-discrete}. Then, the static controllers via partial state coupling is designed in \cite{liu-saberi-stoorvogel-zhang-ijrnc} by passifying the original agent model.

On the other hand, global synchronization for full-state coupling has been studied by
\cite{meng-zhao-lin-2013} (continuous-time) and
\cite{yang-meng-dimarogonas-johansson} (discrete-time) for neutrally stable and double-integrator
agents. The global framework has only been studied for static
protocols under the assumption that the agents are neutrally stable
and the network is detailed balanced or undirected. Partial-state coupling has been
studied in \cite{chu-yuan-zhang} using an adaptive approach but the
observer requires extra communication. The result dealing with networks that are not
detailed balanced are based on \cite{li-xiang-wei} which intrinsically
requires the agents to be single integrators. Recently, we introduce a scale-free linear 
collaborative protocols for global regulated state synchronization of continuous- and discrete-time
homogeneous MAS, see \cite{liu-saberi-stoorvogel-donya-inputsaturation-automatica} and \cite{liu-saberi-stoorvogel-IJRNC-2022}. This \emph{scale-free} protocol
means the design is independent of the information about the
associated communication graph or the size of the network, i.e., the
number of agents.

In this paper, we focus on scalable linear protocol design for global state synchronization of
discrete-time double-integrator MAS in presence of input
saturation. The contributions of this paper are stated as
follows:
\begin{itemize}
	\item  A class of parametric linear protocol is established based on a parameter set in which the designed parametric protocol makes all states of MAS synchronized.
	\item Meanwhile, the linear protocol design is
	scale-free and do not need any information about communication
	network. In other words, the proposed protocols work for any MAS
	with any communication graph with arbitrary number of agents as long
	as the communication graph has a path among each agent.
	%
\end{itemize}


\subsection*{Notations and definitions}
Given a matrix $A\in \mathbb{R}^{m\times n}$, $A\T$ denotes its
conjugate transpose and $\|A\|$ is the induced 2-norm. A square matrix
$A$ is said to be Schur stable if all its eigenvalues are in the
closed unit disk. $A\otimes B$ depicts the Kronecker product between
$A$ and $B$. $I_n$ denotes the $n$-dimensional identity matrix and
$0_n$ denotes $n\times n$ zero matrix; sometimes we drop the subscript
if the dimension is clear from the context.  A matrix
$D=[d_{ij}]_{N\times N}$ is called a row stochastic matrix if (a)
$d_{ij}>0$ for any $i,j$ and (b) $\sum_{j}^{N}d_{ij}=1$ for
$i=1,\cdots,N$.  A row stochastic matrix $D$ has at least one
eigenvalue at 1 with right eigenvector $\textbf{1}$.

A \emph{weighted graph} $\mathcal{G}$ is defined by a triple
$(\mathcal{V}, \mathcal{E}, \mathcal{A})$ where
$\mathcal{V}=\{1,\ldots, N\}$ is a node set, $\mathcal{E}$ is a set of
pairs of nodes indicating connections among nodes, and
$\mathcal{A}=[a_{ij}]\in \mathbb{R}^{N\times N}$ is the weighting
matrix. Each pair in $\mathcal{E}$ is called an \emph{edge}, where
$a_{ij}>0$ denotes an edge $(j,i)\in \mathcal{E}$ from node $j$ to
node $i$ with weight $a_{ij}$. Moreover, $a_{ij}=0$ if there is no
edge from node $j$ to node $i$. We assume there are no self-loops,
i.e.\ we have $a_{ii}=0$. A \emph{path} from node $i_1$ to $i_k$ is a
sequence of nodes $\{i_1,\ldots, i_k\}$ such that
$(i_j, i_{j+1})\in \mathcal{E}$ for $j=1,\ldots, k-1$. A
\emph{directed tree} with root $r$ is a subgraph of the graph
$\mathcal{G}$ in which there exists a unique path from node $r$ to
each node in this subgraph. A \emph{directed spanning tree} is a
directed tree containing all the nodes of the graph. A directed graph may contain many directed spanning trees, and thus there may
be several choices for the root agent. The set of all possible root agents for a graph
$\mathcal{G}$ is denoted by $\pi_g$.

The
\emph{weighted in-degree} of node $i$ is given by 
\[d_{\text{in}}(i) = \sum_{j=1}^N\, a_{ij}.\]
For a weighted graph $\mathcal{G}$, the matrix
$L=[\ell_{ij}]$ with
\[
\ell_{ij}=
\begin{system}{cl}
	\sum_{k=1}^{N} a_{ik}, & i=j,\\
	-a_{ij}, & i\neq j,
\end{system}
\]
is called the \emph{Laplacian matrix} associated with the graph
$\mathcal{G}$. The Laplacian matrix $L$ has all its eigenvalues in the
closed right half plane and at least one eigenvalue at zero associated
with right eigenvector $\textbf{1}$ \cite{royle-godsil}.

\section{Problem formulation}

Consider a MAS consisting of $N$ identical discrete-time double integrator with input saturation:
\begin{equation}\label{eq1}
	\begin{cases}
		{x}_i(k+1)=Ax_i(k)+B\sigma(u_i(k)),\\
		y_i(k)=Cx_i(k)
	\end{cases}
\end{equation}
where $x_i(k)\in\mathbb{R}^{2n}$, $y_i(k)\in\mathbb{R}^n$ and
$u_i(k)\in\mathbb{R}^n$ are the state, output, and the input of agent 
$i=1,\ldots, N$, respectively. And
\[
A=\begin{pmatrix}
	I&I\\0&I
\end{pmatrix}, B=\begin{pmatrix}0\\I\end{pmatrix},C=\begin{pmatrix}
	I&0
\end{pmatrix}
\]
Meanwhile,
\begin{equation*}
	\sigma(v)=\begin{pmatrix} 
		\sat(v_1) \\ \sat(v_2) \\ \vdots \\ \sat(v_m)
	\end{pmatrix}\quad\text{ where }\quad
	v=\begin{pmatrix} 
		v_1 \\ v_2 \\ \vdots \\ v_m
	\end{pmatrix} \in \R^m
\end{equation*}
with $\sat(w)$ is the standard saturation function,
\[
\sat(w)=\sgn(w)\min(1,|w|).
\]

The network provides agent $i$ with the following information,
\begin{equation}\label{eq2}
	\zeta_i(k)=\sum_{j=1}^{N}a_{ij}(y_i(k)-y_j(k)),
\end{equation}
where $a_{ij}\geq 0$ and $a_{ii}=0$. This communication topology of
the network can be described by a weighted graph $\mathcal{G}$ associated with \eqref{eq2}, with
the $a_{ij}$ being the coefficients of the weighting matrix
$\mathcal{A}$. In terms of the coefficients of the associated
Laplacian matrix $L$, $\zeta_i$ can be rewritten as
\begin{equation}\label{zeta_l}
	\zeta_i(k) = \sum_{j=1}^{N}\ell_{ij}y_j(k).
\end{equation}
We refer to this as \emph{partial-state coupling} since only part of
the states are communicated over the network. When $C=I$, it means all states are communicated over the network, we call it \emph{full-state coupling}. Then, the original agents are expressed as
\begin{equation}\label{neq1}
	{x}_i(k+1)=Ax_i(k)+B\sigma (u_i(k))
\end{equation}
and $\zeta_i$ is rewritten as
\begin{equation}\label{zeta-f}
	\zeta_i(k)= \sum_{j=1}^{N}\ell_{ij}x_j(k).
\end{equation}




We need the following definition to explicitly state our problem
formulation.
\begin{definition}\label{def1}  
	We define the following set. $\mathbb{G}^N$ denotes the set of directed graphs of $N$ agents
	which contains a directed spanning tree. Moreover, for any $\mathcal{G}\in\mathbb{G}^N$, we denote the root set of the $\mathcal{G}$ by $\pi_g$.
\end{definition}

\begin{remark}
	When the undirected or strongly connected graph is considered, it is obvious that the set $\pi_g$ will include all nodes of networks. 
\end{remark}

We consider the \textbf{state synchronization} problem under the graph set $\mathbb{G}^N$ satisfying Definition \ref{def1}. Here, its objective is that the agents achieve state
synchronization, that is
\begin{equation}\label{synch_ss}
	\lim_{k\to \infty} (x_{i}(k)-x_j(k))=0.
\end{equation}
for all $i,j \in {1,...,N}$.

Meanwhile, we introduce an additional
information exchange among each agent and its neighbors. In particular, each agent 
$i=1,\ldots, N$ has access to additional information, denoted by
$\hat{\zeta}_i$, of the form
\begin{equation}\label{eqa1}
	\hat{\zeta}_i(k)=\sum_{j=1}^Na_{ij}(\xi_i(k)-\xi_j(k))
\end{equation}
where $\xi_j\in\mathbb{R}^n$ is a variable produced internally by agent $j$ and to be defined in next sections.


Then, we formulate the problem for global state synchronization of a MAS via
linear protocols based on additional information exchange \eqref{eqa1}.
\begin{problem}\label{prob4ss}
	Consider a MAS described by \eqref{eq1} and \eqref{eq2}. Let the set $\mathbb{G}^N$ denote all graphs satisfy Definition \ref{def1}.
	
	The \textbf{scalable global state synchronization problem
		with additional information exchange via linear dynamic protocol} is to find a linear dynamic protocol,
	using only the knowledge of agent model $(A,B,C)$, of the form 
	\begin{equation}\label{protoco5ss}
		\begin{system}{cl}
			x_{c,i}(k+1)=&A_{c,i} x_{c,i}(k)+B_{c,i}{\sigma(u_i(k))}\\
			&\hspace{1.5cm}+C_{c,i}
			{\zeta}_i(k)+D_{c,i} \hat{\zeta}_i(k),\\
			u_i(k)=&K_{c,i}x_{c,i}(k)
		\end{system}
	\end{equation}
	where $\hat{\zeta}_i$ is defined in \eqref{eqa1} with
	$\xi_i=H_{c,i}x_{c,i}$, and $x_{c,i}\in\R^{n_c}$, such that 
	state synchronization \eqref{synch_ss} is achieved for any $N$ and any
	graph $\mathcal{G}\in \mathbb{G}^N$, and for all
	initial conditions of the agents $x_i(0) \in \mathbb{R}^n$, and
	all initial conditions of the protocols
	$x_{c,i}(0) \in \mathbb{R}^{n_c}$.
\end{problem}

\section{Protocol design}

\subsection{Full-state coupling}
Let $\mathcal{G}$ be any graph belongs to $\mathbb{G}^N$, and we choose agent $\theta$ where $\theta$ is any node in the root set $\pi_g$. Then, we propose the following protocol.

\begin{tabular}{p{6.6cm}}
	\toprule
	\textbf{Linear Protocol 1:}  Full-state coupling\\ 
	\toprule
	\vspace*{-7mm}	
	\begin{equation}\label{pscpd1ss}
		\begin{system}{cll}
			{\chi}_i(k+1) &=&
			A\chi_i(k)+B\sigma(u_i(k))\\
			&&\quad+ \frac{1}{1+D_{in}(i)}
			\left[ A{\zeta}_i(k)-A\hat{\zeta}_i(k)
			\right] \\
			u_i(k) &=&K\chi_i(k), \quad i=\{1,\ldots, N\}\setminus\theta\\
			u_\theta(k)&\equiv &0, 
		\end{system}	
	\end{equation}
	where $D_{\text{in}}(i)$ is the upper bound of $d_{in}(i)=\sum_{j=1}^N a_{ij}$. Then, we still choose matrix $K=-\begin{pmatrix}
		k_1 I&k_2 I
	\end{pmatrix}$, where $k_1\in(0, 1)$ and  $k_2 >0$ satisfy the following condition 
\begin{equation}\label{cond1}
(1+k_1-k_2)^2< 1-k_1.
\end{equation}
	$\hat{\zeta}_i(k)$ and ${\zeta}_i(k)$ are defined
	by \eqref{eqa1} and \eqref{eq2}, respectively. And the agents communicate $\xi_i(k)$ which is chosen as $\xi_i(k)=\chi_i(k)$. \\
	\bottomrule
\end{tabular}

\begin{remark}
	$D_{\text{in}}(i)$ is an \emph{upper bound} for the \emph{weighted
		in-degree} $d_{\text{in}}(i)=\sum_{j=1}^N a_{ij}$ for node $i$. 
	It is still local information. In our protocol design, the bound
	$D_{\text{in}}(i)$ is used to scale the communication among agents, which can otherwise cancel
	the impact of the $\theta$th weighting value. 
\end{remark}
The condition \eqref{cond1} can be shown as Fig.~\ref{zone}, where the zone encircled by parabola and line $(0,0)$ to $(0,2)$.
\begin{figure}[ht!]
	\includegraphics[width=5cm, height=6.5cm]{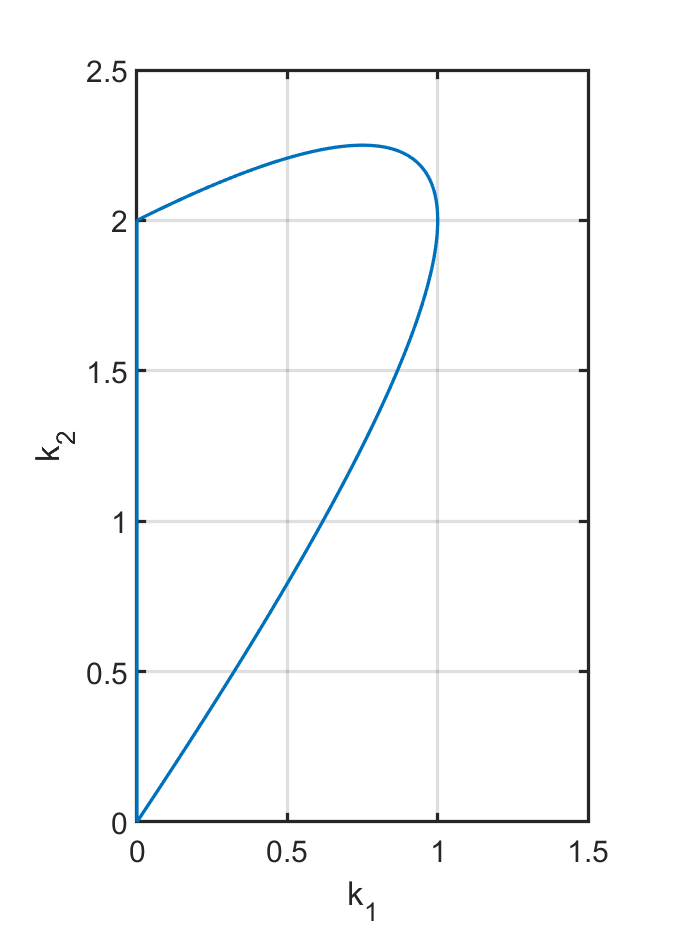}
	\centering
	\caption{Solvable zone of $k_1, k_2$ for synchronization}\label{zone}
\end{figure}

\begin{theorem}\label{mainthm1}
	Consider a MAS described by \eqref{neq1} and \eqref{zeta-f}. Let the set $\mathbb{G}^N$ denote all graphs satisfy Definition \ref{def1}.
	
	Then, the scalable global state synchronization problem with additional information exchange as stated in Problem
	\ref{prob4ss} is solvable. In particular, for any given $k_1\in(0,1)$ and $k_2 >0$ satisfying \eqref{cond1}, the linear
	dynamic protocol \eqref{pscpd1ss} solves the global state
	synchronization problem for any $N$ and any graph
	$\mathcal{G}\in\mathbb{G}^N$. 
\end{theorem}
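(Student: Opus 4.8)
My plan is to reduce the problem to the global stabilization of a single saturated double integrator, after stripping off the network by means of the row-stochastic structure built into \eqref{pscpd1ss}. Since $u_\theta\equiv0$ forces the root to follow the free motion $x_\theta(k)=A^kx_\theta(0)$, I would measure synchronization by the errors relative to $\theta$ and introduce the protocol mismatch $\epsilon_i=\chi_i-(x_i-x_\theta)$. Because the plant \eqref{neq1} and the $\chi_i$-dynamics are driven by the \emph{same} saturated signal $B\sigma(u_i)$, that term cancels in $\epsilon_i(k+1)$; using $\xi_i=\chi_i$ and \eqref{zeta-f} one finds $\zeta_i-\hat\zeta_i=-\sum_j\ell_{ij}\epsilon_j$, so the mismatch obeys the purely linear recursion
\begin{equation*}
\epsilon_i(k+1)=A\epsilon_i(k)-\tfrac{1}{1+D_{\text{in}}(i)}A\textstyle\sum_j\ell_{ij}\epsilon_j(k).
\end{equation*}
Stacking the agents this reads $\epsilon(k+1)=(M\otimes A)\epsilon(k)$ with $M=I_N-\bar D^{-1}L$ and $\bar D=\diag(1+D_{\text{in}}(i))$.

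The first key step is that $M$ is row stochastic: its off-diagonal entries $a_{ij}/(1+D_{\text{in}}(i))$ are nonnegative, its diagonal entries $1-d_{\text{in}}(i)/(1+D_{\text{in}}(i))$ are positive because $D_{\text{in}}(i)\ge d_{\text{in}}(i)$, and each row sums to one since the rows of $L$ sum to zero. This is exactly the source of scalability: $M$ is row stochastic for every $N$, every $\mathcal{G}\in\mathbb{G}^N$ and every admissible bound $D_{\text{in}}(i)$, so no eigenvalue information about $L$ is ever needed. The root coordinate $\epsilon_\theta$ plays the role of the reference: it is driven to zero and anchors the recursion. Since $\theta\in\pi_g$ is a root of a spanning tree, every node is reachable from $\theta$ and the principal submatrix $M^-$ obtained by deleting the row and column of $\theta$ is substochastic with $\rho(M^-)<1$. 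Using $\rho(M^-\otimes A)=\rho(M^-)\,\rho(A)=\rho(M^-)<1$ (recall $\rho(A)=1$), I would conclude $\epsilon_i\to0$ exponentially even though $A$ is only marginally stable, because each scaled Jordan block $(\lambda A)^k$ with $|\lambda|<1$ tends to $0$.

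With the mismatch gone, the relative state $\delta_i=x_i-x_\theta$ of each follower satisfies, after substituting $\chi_i=\delta_i+\epsilon_i$,
\begin{equation*}
\delta_i(k+1)=A\delta_i(k)+B\sigma\bigl(K\delta_i(k)+\nu_i(k)\bigr),\qquad \nu_i(k):=K\epsilon_i(k)\to0.
\end{equation*}
Thus everything reduces to the closed-loop double integrator $\delta(k+1)=A\delta+B\sigma(K\delta)$ with $K=-(k_1I\ \ k_2I)$, which I must show is globally asymptotically stable and robust to the vanishing input $\nu_i$.

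This last step is the main obstacle, and it is where \eqref{cond1} enters. I would look for a quadratic Lyapunov function $V(\delta)=\delta\T(P\otimes I)\delta$ and, exploiting the sector property $\sigma(s)\T\sigma(s)\le s\T\sigma(s)$ of the saturation, show that $(1+k_1-k_2)^2<1-k_1$ is precisely the feasibility inequality that renders $V(\delta(k+1))-V(\delta(k))$ negative definite for \emph{all} $\delta$, not merely in the unsaturated region—this being possible only because a length-two integrator chain is the borderline case admitting a globally stabilizing saturated linear feedback. A vanishing-perturbation (input-to-state) argument then promotes global asymptotic stability of $\delta(k+1)=A\delta+B\sigma(K\delta)$ to $\delta_i(k)\to0$ under the decaying $\nu_i$. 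Finally $\delta_i\to0$ for all followers yields $x_i-x_j\to0$, i.e.\ \eqref{synch_ss}; since $N$, the topology and the spectrum of $L$ never entered the argument, the result holds for every $N$ and every $\mathcal{G}\in\mathbb{G}^N$, as claimed.
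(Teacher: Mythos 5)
Your reduction of the network part is sound and is essentially the paper's: with $u_\theta\equiv 0$ the saturation terms cancel in the mismatch $e_i=\bar x_i-\chi_i$ (your $-\epsilon_i$), giving the linear recursion $e(k+1)=(\bar D\otimes A)e(k)$; the paper proves $\rho(\bar D)<1$ by invoking a lemma on the reduced Laplacian $\hat L$, and your row-stochastic/substochastic argument is an acceptable alternative route to the same fact. The genuine gap is in the stabilization step. A quadratic Lyapunov function $V(\delta)=\delta\T (P\otimes I)\delta$ cannot certify \emph{global} asymptotic stability of $\delta(k+1)=A\delta+B\sigma(K\delta)$ for any $k_1,k_2$. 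Concretely (scalar case, $P=\left(\begin{smallmatrix}p_{11}&p_{12}\\p_{12}&p_{22}\end{smallmatrix}\right)$): at $(\delta_1,\delta_2)=(R,0)$ with $R$ large the control saturates, the next state is $(R,-1)$, and $\Delta V=-2p_{12}R+p_{22}$, which forces $p_{12}>0$; at $(0,R)$ the next state is $(R,R-1)$ and $\Delta V=(p_{11}+2p_{12})R^2+O(R)$, which forces $p_{11}+2p_{12}\leq 0$, i.e.\ $p_{12}<0$. No positive definite $P$ satisfies both, so \eqref{cond1} cannot emerge as the feasibility condition of a quadratic construction. This is exactly why the paper's $V_1$ is non-quadratic: it is a quadratic form in $(\sigma(u),\bar x_2)$ augmented by the term $2\sigma(u)\T(u-\sigma(u))$, its decrement is controlled using Lemma \ref{lem1}, and \eqref{cond1} enters as negative definiteness (via a Schur complement) of a $2\times 2$ matrix $\Phi$ acting on the pair $(\sigma(u(k+1)),\sigma(u(k)))$, not on the state.

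Your closing step, ``a vanishing-perturbation (input-to-state) argument then promotes global asymptotic stability \dots to $\delta_i(k)\to 0$ under the decaying $\nu_i$,'' is also unjustified: for nonlinear systems, global asymptotic stability plus a vanishing input does not imply state convergence unless an ISS-type or $\ell_2$-gain property is established, and you provide none (nor could a quadratic $V$ supply one, by the above). The paper avoids the cascade reasoning entirely: it uses the single composite function $V=(1-h)V_1+hV_2$ with $V_2=e\T P_De$ satisfying \eqref{condd3}, and the $e$-dependent cross term $2\sigma(u(k+1))\T(I_{N-1}\otimes(k_1I\;\;k_2I)\Psi)e(k)$ in $\Delta V_1$ is absorbed, for $h$ sufficiently close to $1$, by $-he\T e$ together with the strict margin $\eps$ that \eqref{cond1} provides in $\Phi$. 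If you wish to retain your cascade structure, you must first prove a robustness property of the saturated double-integrator loop with respect to the matched, exponentially decaying input $\nu_i$; that, together with the global Lyapunov construction itself, is the substantive content of the theorem and is missing from your proposal.
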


To obtain this theorem we need the following lemma.
\begin{lemma}\label{lem1} For all $u,v\in \R^n$, we have
	\begin{equation}\label{cond0}
		(\sigma(v)-\sigma(u))\T(u-\sigma(u))\leq 0.
	\end{equation}
\end{lemma}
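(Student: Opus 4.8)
The plan is to exploit the fact that $\sigma$ acts componentwise, so that the inner product in \eqref{cond0} splits as a sum of scalar terms:
\[
(\sigma(v)-\sigma(u))\T(u-\sigma(u)) = \sum_{j=1}^n \bigl(\sat(v_j)-\sat(u_j)\bigr)\bigl(u_j-\sat(u_j)\bigr).
\]
It therefore suffices to establish the scalar inequality $(\sat(b)-\sat(a))(a-\sat(a))\le 0$ for all $a,b\in\R$; summing over $j$ then yields \eqref{cond0} immediately, since the left-hand side is a sum of nonpositive terms.

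For the scalar claim I would argue by cases on the magnitude of $a$. If $|a|\le 1$ then $\sat(a)=a$, so the factor $a-\sat(a)=0$ and the product vanishes. If $a>1$ then $\sat(a)=1$, giving $a-\sat(a)=a-1>0$, while $\sat(b)-\sat(a)=\sat(b)-1\le 0$ because $\sat(b)\le 1$ for every $b$; hence the product is $\le 0$. Symmetrically, if $a<-1$ then $\sat(a)=-1$ yields $a-\sat(a)=a+1<0$, whereas $\sat(b)-\sat(a)=\sat(b)+1\ge 0$ since $\sat(b)\ge -1$ always; the product is again $\le 0$. These three regimes exhaust all possibilities.

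A more conceptual route, which I would mention as an alternative, is to note that $\sigma$ is exactly the Euclidean projection onto the box $[-1,1]^n$, i.e.\ $\sigma(u)=P_{[-1,1]^n}(u)$. The standard variational characterization of the projection onto a closed convex set $K$ states that $(u-P_K(u))\T(z-P_K(u))\le 0$ for every $z\in K$. Taking $K=[-1,1]^n$ and $z=\sigma(v)$, which lies in $K$ since $\sigma$ maps into the box, reproduces \eqref{cond0} directly and explains the geometric meaning of the inequality.

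I do not anticipate any serious obstacle: the lemma is elementary and amounts to a sector/passivity property of the saturation nonlinearity. The only point demanding care is the completeness of the case analysis — one must verify the sign of each of the two factors in all three magnitude regimes of $a$ — together with the observation that the vector inequality follows termwise, which is what legitimizes the componentwise reduction in the first step.
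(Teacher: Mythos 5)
Your proof is correct and follows essentially the same route as the paper's: reduce to the scalar case by componentwise decomposition, then check the three regimes $|u_i|\leq 1$, $u_i>1$, $u_i<-1$ by inspecting the signs of the two factors. The alternative you sketch via the variational characterization of the Euclidean projection onto $[-1,1]^n$ is a nice conceptual bonus not present in the paper, but the core argument coincides.
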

\begin{proof}
	Note that we have:
	\begin{multline}\label{star}
		(\sigma(v)-\sigma(u))\T(u-\sigma(u)) \\ =\sum_{i=1}^n
		(\sigma(v_i)-\sigma(u_i))(u_i-\sigma(u_i))
	\end{multline}
	when:
	\[
	u=\begin{pmatrix} u_1 \\ \vdots \\ u_n \end{pmatrix},\qquad
	v=\begin{pmatrix} v_1 \\ \vdots \\ v_n \end{pmatrix}
	\]
	Next note that if $u_i\geq 1$ we have
	$\sigma(v_i)-\sigma(u_i)=\sigma(v_i)-1\leq0$ and
	$u_i-\sigma(u_i)=u_i-1\geq 0$ and hence:
	\begin{equation}\label{starstar}
		(\sigma(v_i)-\sigma(u_i))(u_i-\sigma(u_i))\leq 0
	\end{equation}
	On the other hand if $u_i\leq -1$ we have
	$\sigma(v_i)-\sigma(u_i)=\sigma(v_i)+1\geq0$ and
	$u_i-\sigma(u_i)=u_i+1\leq 0$ and \eqref{starstar} is still
	satisfied.
	Finally, if  $|u_i|\leq 1$ then $u_i-\sigma(u_i)=0$ and
	\eqref{starstar} is also satisfied.
	
	Since \eqref{starstar} is satisfied for all $i$ and using
	\eqref{star} we find \eqref{cond0} holds for all $u$ and $v$.
\end{proof}
\vspace{1cm}
\begin{proof}[The proof of Theorem \ref{mainthm1}]
	Since we have $u_\theta(k)\equiv0$, we obtain $\sigma(u_\theta(k))=0$. The model of agent $\theta$ is rewritten as
	\[
	x_\theta(k+1)=Ax_\theta(k)
	\]
	
	Then, let $\bar{x}(k)=x_i(k)-x_\theta(k)$, we have
	\begin{equation}
		\begin{system}{l}
			\bar{x}_{i}(k+1)=A\bar{x}_{i}(k)+B\sigma(u_i(k))\\
			\chi_i(k+1)=A\chi_i(k)+B\sigma(u_i(k))\\
			\qquad\qquad\quad+ \frac{1}{1+D_{{in}}(i)}
			\sum_{j=1}^{N-1}{\ell}_{ij} A\left[\bar{x}_{i}(k)-\chi_i(k) \right] \\
			u_i(k)=-\begin{pmatrix}k_1 I&k_2  I\end{pmatrix}\chi_i(k)
		\end{system}
	\end{equation}
	Then by defining $2(N-1)n$-dimensional vectors
	\begin{multline*}
		\bar{x}(k)=\begin{pmatrix}
			\bar{x}_1(k)\\\vdots\\\bar{x}_{N-1}(k)
		\end{pmatrix},   
		{\chi}(k)=\begin{pmatrix}
			\chi_1(k)\\\vdots\\\chi_{N}(k)
		\end{pmatrix}, \\u(k)=\begin{pmatrix}
			u_1(k)\\\vdots\\u_N(k)
		\end{pmatrix},\sigma(u(k))=\begin{pmatrix}
			\sigma(u_1(k))\\\vdots\\\sigma(u_N(k))
		\end{pmatrix}
	\end{multline*}
	where $\chi_\theta(k)$, $u_\theta(k)$, and $\sigma(u_\theta(k))$ are not included. We have the following closed-loop system
	\begin{equation}
		\begin{system}{l}
			\bar{x}(k+1)=(I_{N-1}\otimes A)\bar{x}(k)+(I_{N-1}\otimes B)\sigma(u(k))\\
			\chi(k+1)=(I_{N-1}\otimes A)\chi(k)+(I_{N-1}\otimes B)\sigma(u(k))\\
			\hspace{2cm}+((I_{N-1}-\bar{D})\otimes A)(\bar{x}(k)-\chi(k))\\
			u(k)=-(I_{N-1}\otimes \begin{pmatrix}k_1 I&k_2 I\end{pmatrix})\chi(k)
		\end{system}
	\end{equation}
	where $\bar{D}=I_{N-1}-(I_{N-1}+D_{d,\text{in}})^{-1}\hat{L}$,
	\[
	D_{d,\text{in}}=\diag\{D_{in}(1),D_{in}(2),\cdots,D_{in}(N)\} \text{ without } D_{in}(\theta),
	\]
	and $\hat{L}$ is the matrix obtained from $L$ by
	deleting the $\theta$th row and the $\theta$th column. Meanwhile, according to \cite[Lemma 1]{grip-yang-saberi-stoorvogel-automatica}, we have the real part of all eigenvalues of $\hat{L}$ are greater than zero. Thus, it implies all eigenvalues' absolute value of
	$\bar{D}\in\R^{(N-1)\times (N-1)}$ are less than 1.
	
	Let $e(k)=\bar{x}(k)-\chi(k)$, we have
	\begin{equation}\label{eqn1}
		\begin{system}{l}
			\bar{x}(k+1)=(I_{N-1}\otimes A)\bar{x}(k)+(I_{N-1}\otimes B)\sigma(u(k))\\
			e(k+1)=(\bar{D}\otimes A)e(k)\\
			u(k)=-(I_{N-1}\otimes \begin{pmatrix}k_1 I&k_2 I\end{pmatrix})(\bar{x}(k)-e(k))
		\end{system}
	\end{equation}	
	Then, let
	\[
	\bar{x}_i(k)=\begin{pmatrix}
		\bar{x}_{1,i}(k)\\
		\bar{x}_{2,i}(k)
	\end{pmatrix},
	\]
	we have
	\begin{equation}\label{eqo1}
		\begin{system}{l}
			\bar{x}_{1}(k+1)=\bar{x}_{1}(k)+\bar{x}_{2}(k)\\
			\bar{x}_{2}(k+1)=\bar{x}_{2}(k)+\sigma(u(k))\\
			e(k+1)=(\bar{D}\otimes A)e(k)\\
			u(k)=-k_1 \bar{x}_{1}(k)-k_2 \bar{x}_{2}(k)+\left(I_{N-1}\otimes \begin{pmatrix}k_1 I&k_2 I\end{pmatrix}\right)e(k)
		\end{system}
	\end{equation}

	

	The eigenvalues of $\bar{D}\otimes A$ are of the form
	$\lambda_i \mu_j$, with $\lambda_i$ and $\mu_j$ eigenvalues of
	$\bar{D}$ and $A$, respectively. Since $|\lambda_i|<1$ and
	$\mu_j\equiv 1$, we find $\bar{D}\otimes A$ is asymptotically
	stable. Therefore we find that:
	\begin{equation}\label{estable}
		\lim_{k\to \infty}e_i(k)\to 0.
	\end{equation}
	It also shows that $e_i\in \ell_2$.  Thus, we just need to prove
	the stability of \eqref{eqo1}.
	Thus, we have $\bar{x}(k)\to 0$ as $k\to \infty$ with $e_i\in \ell_2$, which will obtain the synchronization result.
	

	
	To prove the synchronization result, we consider the following weighting Lyapunov function
	\begin{equation}\label{lyapunvd1}
		V(k)=(1-h)V_1(k)+hV_2(k)
	\end{equation}
	where $h\in(0,1)$,
	\begin{align*}
		V_1(k)=&\begin{pmatrix}
			\sigma(u(k))\\\bar{x}_2(k)
		\end{pmatrix}\T
		\left[\begin{pmatrix}
			1& k_1\\ k_1&{k_1}
		\end{pmatrix}\otimes I_{(N-1)n}\right]
		\begin{pmatrix}
			\sigma(u(k))\\\bar{x}_2(k)
		\end{pmatrix}\\
		&\hspace{2cm}+2\sigma(u(k))\T( u(k)-\sigma(u(k)))\\
		V_2(k)=&e\T(k) P_De(k)
	\end{align*}
	and $P_D>0$ satisfies 
	\begin{equation}\label{condd3}
		(\bar{D}\otimes A)\T P_D(\bar{D}\otimes A)- P_D\leq -2I_{2(N-1)n}.
	\end{equation}
	Here, we obtain $V_1(k)$ and $V_2(k)$ are positive, i.e. $V_1(k)>0$ except for $(u(k), \bar{x}_2(k))=0$ when $V_1(k)=0$ and $V_2(k)>0$ except for $e(k)=0$ when $V_2(k)=0$. 
	Then, we have
	\begin{align*}
		&\Delta V_1(k)=V_1(k+1)-V_1(k)\\
		=&-\sigma(u(k+1))\T \sigma(u(k+1))+2\sigma(u(k+1))\T u(k)\\
		&+2(k_1-k_2) \sigma(u(k+1))\T\sigma(u(k))\\		
		&+(1+k_1) \sigma(u(k))\T\sigma(u(k))-2\sigma(u(k))\T u(k)\\
		&+2\sigma(u(k+1))\T(I_{N-1}\otimes (k_1 I \quad k_2 I)\Psi)e(k)\\
		=&2(\sigma(u(k+1))-\sigma(u(k)))\T(u(k)-\sigma(u(k)))\\
		&+2(1+k_1-k_2) \sigma(u(k+1))\T\sigma(u(k))\\
		&-\sigma(u(k+1))\T \sigma(u(k+1))-(1-k_1)\sigma(u(k))\T\sigma(u(k))\\
		&+2\sigma(u(k+1))\T(I_{N-1}\otimes (k_1 I \quad k_2 I)\Psi)e(k)\\
		\leq&2(1+k_1-k_2) \sigma(u(k+1))\T\sigma(u(k))\\
		&-\sigma(u(k+1))\T \sigma(u(k+1))-(1-k_1)\sigma(u(k))\T\sigma(u(k))\\
		&+2\sigma(u(k+1))\T(I_{N-1}\otimes (k_1 I \quad k_2 I)\Psi)e(k)%
	\end{align*}
	since $(\sigma(u(k+1))-\sigma(u(k)))\T(u(k)-\sigma(u(k)))\leq 0$ based on Lemma \ref{lem1}, where $\Psi=\bar{D}\otimes A-I_{2(N-1)n}$.
	Meanwhile, for $V_2(k)$ we have
	\begin{equation*}
		\Delta V_2(k)=V_2(k+1)-V_2(k)\leq -2e\T(k) e(k)
	\end{equation*}
	based on condition \eqref{condd3}.	
	Thus, one can obtain
	\begin{align*}
		&\Delta V(k)\leq (1-h)\Delta V_1(k)+h\Delta V_2(k)\\
		\leq &2(1-h)(1+k_1-k_2) \sigma(u(k+1))\T\sigma(u(k))\\
		&-(1-h)\left(1-\frac{\|\Psi\|^2(1-h)(k_1^2+k_2^2)}{h}\right)\\
		&\times\|\sigma(u(k+1))\|^2-(1-h)(1-k_1)\sigma(u(k))\T\sigma(u(k))\\
		&-he\T(k) e(k)\\
		=&(1-h)\begin{pmatrix}
			\sigma(u(k+1))\\\sigma(u(k))
		\end{pmatrix}\T (\Phi\otimes I_{(N-1)n} ) \begin{pmatrix}
			\sigma(u(k+1))
			\\\sigma(u(k))
		\end{pmatrix}\\
	&-he\T(k) e(k)
	\end{align*}	
	where
	\begin{equation}\label{cond2}
		\Phi=\begin{pmatrix}
			-1+\frac{\|\Psi\|^2(1-h)(k_1^2+k_2^2)}{h}&1+k_1-k_2\\
			1+k_1-k_2&-(1-k_1)
		\end{pmatrix}.
	\end{equation}
	
	Obviously we just need to prove $\Phi< 0$. Without loss of generality, there exists an $\eps>0$ such that 
	\begin{equation}\label{cond3}
		\frac{(1+k_1-k_2)^2}{1-k_1}= 1-\eps.
	\end{equation}
	By using Schur Compliment, we have $\Phi< 0$ is equivalent to
	\[
	-1+\frac{\|\Psi\|^2(1-h)(k_1^2+k_2^2)}{h}+\frac{(1+k_1-k_2)^2}{1-k_1}< 0.
	\]
	From condition \eqref{cond3}, we can obtain
	\begin{align*}
		&-1+\frac{\|\Psi\|^2(1-h)(k_1^2+k_2^2)}{h}+\frac{(1+k_1-k_2)^2}{1-k_1}\\
		< &\frac{\|\Psi\|^2(1-h)(k_1^2+k_2^2)}{h}-\eps
	\end{align*}
    For $h$ sufficiently close to 1, one can obtain
	\[
	\frac{\|\Psi\|^2(1-h)(k_1^2+k_2^2)}{h}<\eps
	\]
	It means that we obtain  $\Phi< 0$.
	
	Thus, we have $\Delta V(k)< 0$ for $\begin{pmatrix}
		\sigma(u(k+1))
		\\\sigma(u(k))
	\end{pmatrix}\neq 0$, $\bar{x}(k)\to 0$ as $k\to \infty$. 

Furthermore, when $\Delta V(k)=0$, we obtain $u(k+1)=u(k)=0$ and $e(k)=0$. It is easy to obtain $\bar{x}_1(k)=\bar{x}_2(k)=0$ at $\Delta V(k)=0$. 

Thus, the invariance set
$\{(\bar{x}(k),e(k)): \Delta{V}(\bar{x}(k),e(k))=0\}$ contains
no trajectory of the system except the trivial trajectory
$(\bar{x}(k),e(k))=(0,0)$. \eqref{eqn1} is globally asymptotically stable based on LaSalle’s invariance principle. 

Finally, we obtain the global state synchronization result.	
\end{proof}

\begin{remark}
	From Theorem \ref{mainthm1}, we can know that the solvable zone shown in Fig. \ref{zone} is an open zone, i.e., $\frac{(1+k_1-k_2)^2}{1-k_1}< 1$ and $k_1>0$. However, if let $k_1=1$ and $k_2=2$, then we just need $\frac{\|\Psi\|^2(1-h)(k_1^2+k_2^2)}{h}<1$ to guarantee $\Phi\leq 0$. Obviously, it is obtained for some $h\to 1$. Thus, the point $(1, 2)$ is also a pair of solvable parameter for scalable synchronization result.
\end{remark}

\subsubsection{Partial-state coupling}
Let $\mathcal{G}$ be any graph belongs to $\mathbb{G}^N$, and also we choose agent $\theta$ where $\theta$ is any node in the root set $\pi_g$. Then, we propose the following linear protocol.

\begin{tabular}{p{6.6cm}}
	\toprule
	\textbf{Linear protocol 2:} Partial-state coupling\\
	\toprule
	\vspace*{-7mm}
	\begin{equation}\label{pscpd2ss}
		\begin{system}{cll}
			\hat{x}_i(k+1) &=&
			(A-FC)\hat{x}_i(k)\\
			&&\quad+\frac{1}{1+D_{{in}}(i)} \left[ B\hat{\zeta}_{i2}(k)+F{\zeta}_i(k)\right] \\ 
			{\chi}_i(k+1) &=& A\chi_i(k)+Bu_i(k)\\
			&&\quad+A\hat{x}_i(k) -\frac{1}{1+D_{{in}}(i)}A\hat{\zeta}_{i1}(k)\\
			u_i(k) &=&  K \chi_i(k), \quad i=\{1,\ldots, N\}\setminus\theta\\
			u_\theta(k)&\equiv &0, 
		\end{system}
	\end{equation}
	where $D_{\text{in}}(i)$ is the upper bound of $d_{in}(i)=\sum_{j=1}^N a_{ij}$. Then, we choose matrix $K=-\begin{pmatrix}
		k_1 I&k_2 I
	\end{pmatrix}$, where $k_1\in(0,1),k_2>0$ satisfy condition \eqref{cond1}. In this
	protocol, the agents communicate
	\[
	\xi_i(k)=\begin{pmatrix} \xi_{i1}(k) \\ \xi_{i2}(k) 
	\end{pmatrix}=\begin{pmatrix}
		\chi_i(k) \\ \sigma (u_i(k))
	\end{pmatrix},
	\]
	i.e.\ each agent has access to additional
	information
	\[
	\hat{\zeta}_i(k)=\begin{pmatrix} 
		\hat{\zeta}_{i1}(k) \\ \hat{\zeta}_{i2} (k)
	\end{pmatrix},
	\]
	where:
	\begin{equation}\label{add_1}
		\begin{system}{l}
			\hat{\zeta}_{i1}(k)=\sum_{j=1}^N a_{ij}(\chi_i(k)-\chi_j(k)) \\
			\hat{\zeta}_{i2}(k)=\sum_{j=1}^{N} a_{ij}(\sigma (u_i(k))-\sigma (u_j(k)))
		\end{system}		
	\end{equation}
	while $\bar{\zeta}_i(k)$ is defined via \eqref{eq2}.\\
	\bottomrule
\end{tabular}

We have following theorem.
\begin{theorem}\label{mainthm2}
	Consider a MAS described by \eqref{eq1} and \eqref{eq2}. Let the set $\mathbb{G}^N$ denote all graphs satisfy Definition \ref{def1}.  
	
	Then, the scalable global state synchronization problem with additional information exchange as stated in Problem
	\ref{prob4ss} is solvable. In particular, for any given $k_1\in(0,1)$ and $k_2>0$ satisfying \eqref{cond1}, the linear
	dynamic protocol \eqref{pscpd2ss} solves the global state
	synchronization problem for any $N$ and any graph
	$\mathcal{G}\in\mathbb{G}^N$. 
\end{theorem}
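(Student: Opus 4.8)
The plan is to reduce the partial-state closed loop to the full-state situation already settled in Theorem~\ref{mainthm1}, by showing that the observer block of protocol \eqref{pscpd2ss} reconstructs the relative state up to an error that decays \emph{autonomously}, i.e.\ independently of the saturation. As in the proof of Theorem~\ref{mainthm1}, I would first use $u_\theta(k)\equiv0$ to get $x_\theta(k+1)=Ax_\theta(k)$, set $\bar{x}_i=x_i-x_\theta$ so that $\bar{x}_i(k+1)=A\bar{x}_i(k)+B\sigma(u_i(k))$, and re-express $\zeta_i,\hat{\zeta}_{i1},\hat{\zeta}_{i2}$ through the $\theta$-deleted Laplacian $\hat{L}$, using $\sigma(u_\theta)=0$. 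Stacking the $N-1$ remaining agents and writing $\bar{L}_s=(I+D_{d,\text{in}})^{-1}\hat{L}$ and $\bar{D}=I-\bar{L}_s$, I again invoke \cite[Lemma~1]{grip-yang-saberi-stoorvogel-automatica} to place the spectrum of $\bar{D}$ strictly inside the unit disk, so that $\bar{D}\otimes A$ is Schur. Here the observer gain $F$ is an additional design freedom, chosen so that $A-FC$ is Schur; this is possible because the double-integrator pair $(C,A)$ is observable.

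The decisive step is the choice of observer coordinates. I would define the observer error $\epsilon:=\hat{x}-(\bar{L}_s\otimes I)\bar{x}$, i.e.\ claim that $\hat{x}_i$ estimates the scaled relative state $\sum_j(\bar{L}_s)_{ij}\bar{x}_j$. Substituting the stacked $\bar{x}$- and $\hat{x}$-recursions, the saturation contributions $(\bar{L}_s\otimes B)\sigma(u)$ cancel, while the measurement term rewrites as $(\bar{L}_s\otimes FC)\bar{x}=(I\otimes FC)(\bar{L}_s\otimes I)\bar{x}$ and combines with $-(I\otimes A)(\bar{L}_s\otimes I)\bar{x}$ to leave
\[
\epsilon(k+1)=\left(I_{N-1}\otimes(A-FC)\right)\epsilon(k).
\]
Since $A-FC$ is Schur, $\epsilon(k)\to0$ and $\epsilon\in\ell_2$. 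Next, with $e:=\bar{x}-\chi$ and $\hat{x}=\epsilon+(\bar{L}_s\otimes I)\bar{x}$ inserted into the $\chi$-recursion of \eqref{pscpd2ss}, the $\bar{x}$-dependent coupling telescopes exactly as in Theorem~\ref{mainthm1} and produces
\[
e(k+1)=(\bar{D}\otimes A)e(k)-\left(I_{N-1}\otimes A\right)\epsilon(k),
\]
a Schur system driven by the $\ell_2$ signal $\epsilon$, so $e(k)\to0$ and $e\in\ell_2$ as well.

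I would then treat the combined error $(\epsilon,e)$, which obeys a block-lower-triangular linear recursion whose diagonal blocks $I_{N-1}\otimes(A-FC)$ and $\bar{D}\otimes A$ are both Schur; hence the whole error subsystem is Schur, and I pick $P>0$ solving the associated Lyapunov inequality (the analogue of \eqref{condd3}) and set $V_2$ to this quadratic form, so that $\Delta V_2\le-2\|(\epsilon,e)\|^2$. With $\epsilon,e$ eliminated, the surviving $\bar{x}$-dynamics are precisely the double-integrator-with-saturation equations \eqref{eqo1} of the full-state proof, with $u=-k_1\bar{x}_1-k_2\bar{x}_2$ plus terms linear in $\epsilon$ and $e$. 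I therefore reuse the weighted Lyapunov function $V=(1-h)V_1+hV_2$ with the $V_1$ of Theorem~\ref{mainthm1} verbatim. The $\Delta V$ computation is identical up to the extra cross-terms generated by the error, which are absorbed for $h$ close to $1$; condition \eqref{cond1} again forces the matrix $\Phi$ of \eqref{cond2} to be negative definite. Thus $\Delta V<0$ off the set where $u=0$, $e=0$, $\epsilon=0$, and on that set LaSalle's invariance principle forces $\bar{x}_1=\bar{x}_2=0$, whence $\bar{x}\to0$, i.e.\ $x_i-x_\theta\to0$ for every $i$, which is \eqref{synch_ss}.

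The main obstacle is the second paragraph: finding the coordinate change $\epsilon=\hat{x}-(\bar{L}_s\otimes I)\bar{x}$ under which the observer error becomes independent of the saturation nonlinearity, so that the error dynamics collapse to the purely linear cascade $\epsilon\to e\to\bar{x}$. Once this triangular structure is exposed, everything downstream is routine, because the controller gain $K$, the parameter constraint \eqref{cond1}, and the Lyapunov/LaSalle machinery are all inherited unchanged from Theorem~\ref{mainthm1}.
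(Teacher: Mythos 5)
Your proposal follows essentially the same route as the paper: your observer error $\epsilon=\hat{x}-(\bar{L}_s\otimes I)\bar{x}$ is exactly the paper's $\bar{e}=[(I_{N-1}-\bar{D})\otimes I]\bar{x}-\hat{x}$ up to sign (since $I_{N-1}-\bar{D}=(I_{N-1}+D_{d,\text{in}})^{-1}\hat{L}$), and it yields the same triangular cascade $\bar{e}\to e\to\bar{x}$ with Schur diagonal blocks, after which the Lyapunov/LaSalle machinery of Theorem~\ref{mainthm1} is reused as in the paper. (Your driving term $(I_{N-1}\otimes A)\bar{e}$ in the $e$-recursion is in fact the more precise form; the paper writes simply $\bar{e}$, which changes nothing in the argument.)
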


\begin{proof}[The proof of Theorem \ref{mainthm2}]
	Similar to Theorem \ref{mainthm1}, by defining
	$\bar{x}_i(k)=x_i(k)-x_\theta (k)$, $e(k)=\bar{x}(k)-\chi(k)$, and
	$\bar{e}(k)=[(I_{N-1}-\bar{D})\otimes I]\bar{x}(k)-\hat{x}(k)$, we have
	the matrix expression of closed-loop system
	\begin{equation*}\label{newsysted}
		\begin{system}{l}
			\bar{x}_{1}(k+1)=\bar{x}_{1}(k)+\bar{x}_{2}(k)\\
			\bar{x}_{2}(k+1)=\bar{x}_{2}(k)+\sigma(u(k))\\
			e(k+1)=(\bar{D}\otimes A)e(k)+\bar{e}(k)\\
			\bar{e}(k+1)=[I_{N-1}\otimes (A-FC)]\bar{e}(k)\\
			u(k)=-\left(I_{N-1}\otimes \begin{pmatrix}k_1 I&k_2 I\end{pmatrix}\right)\chi(k)
		\end{system}
	\end{equation*}  
	
	Since the eigenvalues of $A-FC$ and $\bar{D}\otimes A$ are in open unit disk, we just need to prove 
	the stability of $\bar{x}_{1}(k)$ and $\bar{x}_{2}(k)$.
	
	Similar to the proof of Theorem \ref{mainthm1}, the state synchronization
	result can be obtained.
\end{proof}

\section{Numerical examples}
In this section, we will illustrate the effectiveness and scalability of our designs for discrete time double-integrator MAS 
by one numerical example. We use agent models \eqref{eq1} with parameters 
\[
A=\begin{pmatrix}
	1&1\\0&1
\end{pmatrix}, B=\begin{pmatrix}0\\1\end{pmatrix},C=\begin{pmatrix}
	1&0
\end{pmatrix}
\]

Meanwhile, we use three graphs which consists of 4, 7, and 60 agents respectively. We consider the case of global state synchronization with
partial-state coupling. To show the effectiveness of our protocol design based on condition \eqref{cond1}, we choose $F=[1.5\quad 0.5]\T$, $k_1=0.5$ and $k_2=1$. Furthermore, we choose $\theta=1$.
The protocol is provided as follows:
\begin{equation}\label{protoclsim}
	\begin{system}{ll}
		\hat{x}_i(k+1) &=
		\begin{pmatrix}
			-0.5&1\\-0.5&1
		\end{pmatrix}\hat{x}_i(k)
		\\
		&\qquad +\frac{1}{1+D_{\text{in}}(i)} \left[  \begin{pmatrix}
			0\\1
		\end{pmatrix}\hat{\zeta}_{i2}(k)+\begin{pmatrix}
			1.5\\0.5
		\end{pmatrix}{\zeta}_i(k)\right] \\ 
		{\chi}_i(k+1) &= \begin{pmatrix}
			1&1\\0&1
		\end{pmatrix}\left[\chi_i(k)-\frac{1}{1+D_{\text{in}}(i)}
		 \hat{\zeta}_{i1}(k) \right]\\
		&\qquad+\begin{pmatrix}
			0\\1
		\end{pmatrix}\sigma(u_i(k))+\begin{pmatrix}
			1&1\\0&1
		\end{pmatrix}\hat{x}_i(k)\\
		u_i(k) &=  -\begin{pmatrix}
			k_1 &k_2 
		\end{pmatrix} \chi_i(k).
	\end{system}
\end{equation}
with $i=2,\cdots,N$.

Now we are creating three homogeneous MAS with different number of
agents and different communication topologies to show that the designed
protocol is scale-free, independent of the communication network, and number of
agents $N$. 


\subsection*{Case I: 4-agent graph}
In this case, we consider a MAS with $4$ agents,
$N=4$. The associated adjacency matrix to the communication network is assumed to be $\mathcal{A}_I$ where $a_{21}=a_{32}=a_{43}=1$. 

The simulation results of Case I by using protocol \eqref{protoclsim} are demonstrated in Figure
\ref{results_case1.1}. 
\begin{figure}[ht!]
	\includegraphics[width=8.5cm, height=5cm]{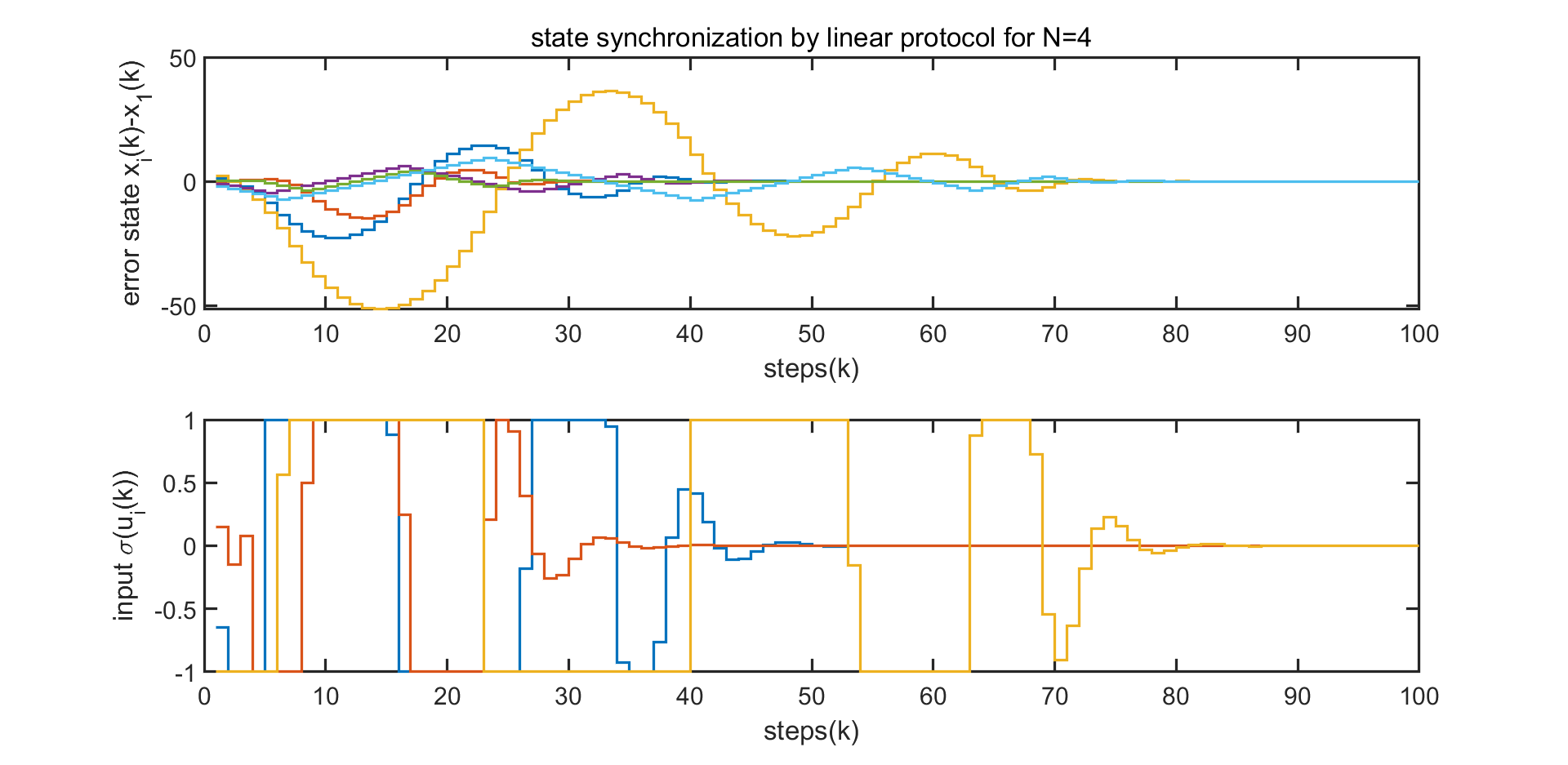} \centering
	\caption{Global state synchronization for
		MAS with communication graph
		I when $k_1=0.5$ and $k_2=1$.}\label{results_case1.1}
\end{figure}

\subsection*{Case II: 7-agent graph}
Then, we consider a MAS with $7$ agents
$N=7$. The associated adjacency matrix to the communication network is assumed to be $\mathcal{A}_{II}$ where $a_{21}=a_{32}=a_{43}=a_{24}=a_{54}=a_{47}=a_{65}=a_{76}=1$.

In this case, we still use protocol \eqref{protoclsim}. The simulation result are demonstrated in Figure
\ref{results_case2.1}. 


\begin{figure}[ht!]
	\includegraphics[width=8.5cm, height=5cm]{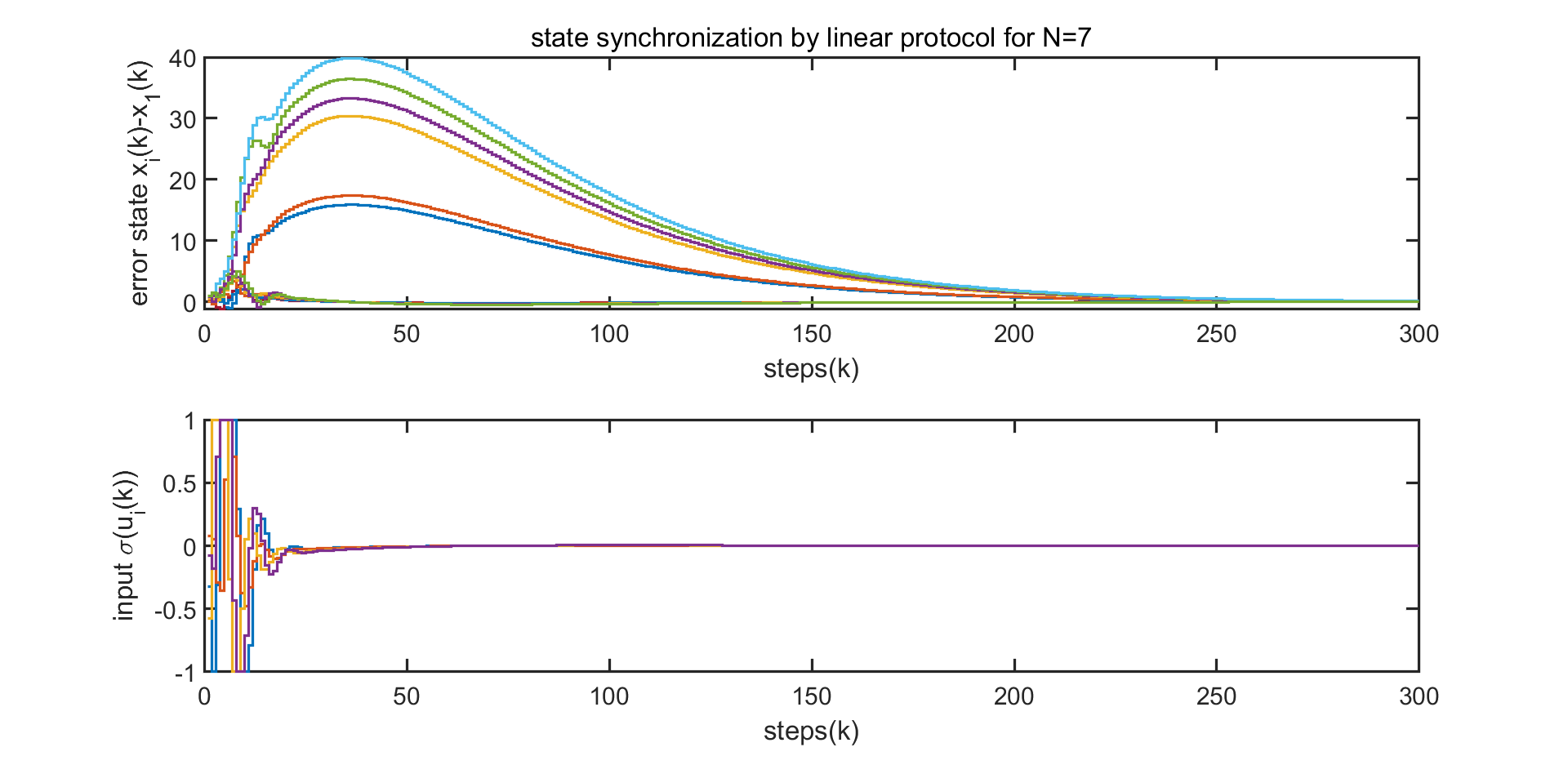} \centering
	\caption{Global state synchronization for
		MAS with communication graph
		II when $k_1=0.5$ and $k_2=1$.}\label{results_case2.1}
\end{figure}

\subsection*{Case III: 60-agent graph}
Finally, we consider a MAS with $60$ agents $N=60$ 
and a directed loop graph, where the associated adjacency matrix is assumed to be $\mathcal{A}_{III}$ only with $a_{i+1,i}=a_{1,60}=1$ and $i=1,\cdots,59$.


The simulation result by using protocol \eqref{protoclsim} is demonstrated in Figure
\ref{results_case3.1}. 
\begin{figure}[ht!]
	\includegraphics[width=8.5cm, height=5cm]{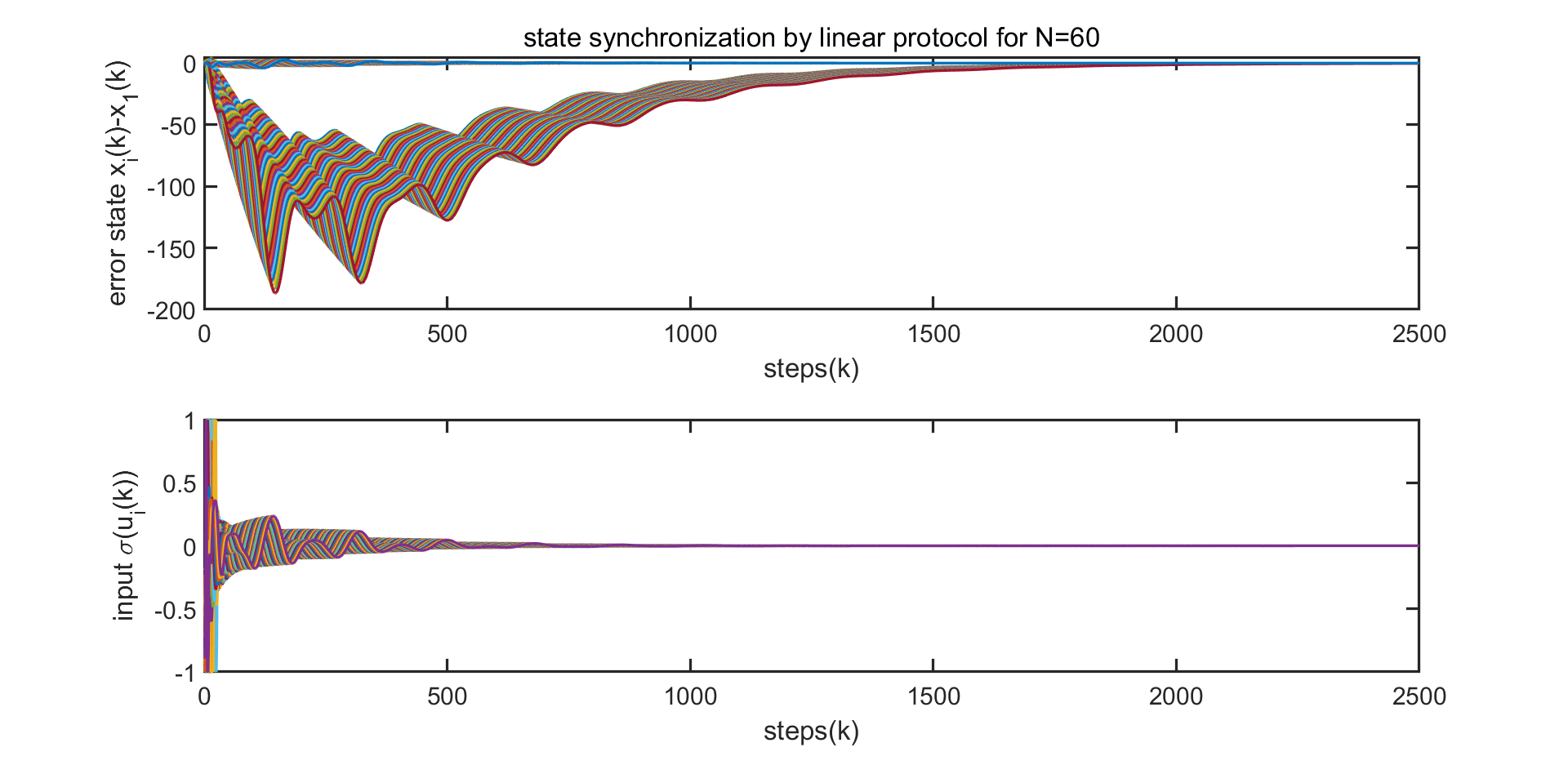} \centering
	\caption{Global state synchronization for
		MAS with communication graph
		III when $k_1=0.5$ and $k_2=1$.}\label{results_case3.1}
\end{figure}

All above simulation results with different graphs show that the
protocol design is independent of the communication graph and is scale
free so that we can achieve synchronization with one-shot protocol
design, for any graph with any number of agents.

\section{Conclusion}

In this paper, we have developed a scale-free linear protocol design to
achieve global state synchronization for discrete-time
double-integrator MAS subject to actuator saturation. The scale-free
protocols are designed solely based on agent models without utilizing
any information and are universal, it means that for any number of
agents and any communication graph. Meanwhile, we provide a solvable zone to protocol's parameter, and all parameter pair in this zone can achieve state synchronization of discrete-time double integrator MAS with input saturation. 

\bibliographystyle{plain}
\bibliography{referenc}
\end{document}